\definecolor{myblue}{rgb}{.8, .8, 1}
 \def\@textbottom{\vskip \z@ \@plus 11pt}
 \let\@texttop\relax
\author{\IEEEauthorblockN{Abdoulaye Tall\IEEEauthorrefmark{1}, Zwi Altman \IEEEauthorrefmark{1} and Eitan Altman\IEEEauthorrefmark{2}} \\ \IEEEauthorblockA{\IEEEauthorrefmark{1}Orange Labs
38/40 rue du General Leclerc,92794 Issy-les-Moulineaux \\Email: \{abdoulaye.tall,zwi.altman\}@orange.com}\\ \IEEEauthorblockA{\IEEEauthorrefmark{2}INRIA Sophia Antipolis, 06902 Sophia Antipolis, France, Email:eitan.altman@sophia.inria.fr}
}
\title{Self Organizing strategies for enhanced ICIC (eICIC)}
\begin{document}
\maketitle

\begin{abstract}
	Small cells have been identified as an effective solution for coping with the important traffic increase that is expected in the coming years. But this solution is accompanied by additional interference that needs to be mitigated. The enhanced Inter Cell Interference Coordination (eICIC) feature has been introduced to address the interference problem. eICIC involves two parameters which need to be optimized, namely the Cell Range Extension (CRE) of the small cells and the ABS ratio (ABSr) which defines a mute ratio for the macro cell to reduce the interference it produces. In this paper we propose self-optimizing algorithms for the eICIC. The CRE is adjusted by means of load balancing algorithm. The ABSr parameter is optimized by maximizing a proportional fair utility of user throughputs. The convergence of the algorithms is proven using stochastic approximation theorems. Numerical simulations illustrate the important performance gain brought about by the different algorithms.

\begin{IEEEkeywords}
Self-Organizing Networks, enhanced Inter Cell Interference Coordination, eICIC, Load Balancing, Stochastic Approximation, Cell Individual Offset, Almost Blank Sub-Frame (ABS)
\end{IEEEkeywords}
\end{abstract}

\section{Introduction}
	Small cells have been identified as one of the most promising solutions for coping with the expected traffic growth in the coming years. The low transmit power of these nodes makes their footprint very small, limiting the amount of macrocell traffic they can offload. To enhance the offloading capabilities of small cells, the \ac{CRE} mechanism has been introduced in the 3GPP standard allowing a small cell to extend its coverage by increasing the \ac{CIO} parameter. The \ac{CIO} defines the attachment rule of \acp{UE} to the small cell. The macro \ac{BS} which previously served the \acp{UE} at the small cells' extended coverage zone may now strongly interfere them. An interference mitigation mechanism has been proposed to protect these \acp{UE} by muting almost all macro BS transmissions on a certain portion of subframes, thus letting small cells serve their users with less interference. The muted sub frames are denoted as \ac{ABS}. The combination of these two mechanisms has been described in 3GPP \cite[Section 16.1.5]{3gpp.36.300-R10} under the name of \textit{\ac{eICIC}}.
	
Two different parameters are involved in \ac{eICIC}: the \ac{CIO} which in our scenario is applied only to the small \acp{BS} to offload macro BSs, and the \ac{ABSr}, namely the percentage of the muted subframes at the macro BSs. Although the principle of eICIC has been described in 3GPP, its actual implementation is not clearly specified, and in particular, the way to set these eICIC parameters which may be critical for the cell performance.

Previous contributions on eICIC focus on performance gain brought about by this feature in a static heterogeneous network \cite{wang2012performance,pedersen2012eicic,shirakabe2011performance}. In \cite{pang2012optimized} and \cite{bedekar2013optimal}, the \ac{eICIC} parameters are optimized for a static scenario. The \ac{ABSr} optimization problem has been treated for a dynamic environment in \cite{vasudevandynamic} assuming fixed \acp{CIO}. The problem of optimizing both \ac{CIO} and \ac{ABSr} has been studied in \cite{deb2013algorithms,yasireICIC} using a centralized approach. The solution is computationally demanding and can be implemented as a management plane solution.
	
The purpose of this paper is to propose efficient and distributed \ac{SON} algorithms for optimizing both the \acp{CIO} and the \acp{ABSr} parameters using stochastic approximation techniques \cite{Kushner,Borkar}.
We use a distributed load balancing objective for the small cell \ac{CIO} optimization \cite{CombesInfocom2012}. 
Then we propose a distributed solution for optimizing the \acp{ABSr} at the small \acp{BS} which in turn request these \acp{ABSr} from their interfering macro \acp{BS}. \ac{ABSr} optimization algorithms are proposed using objective functions based on a \ac{PF} utility of users' throughputs. Different strategies for scheduling small \acp{BS}' users during the muted subframes of the macro \ac{BS} are studied.

	The contributions of the paper are the following:
	\begin{itemize}
	\item A load balancing \ac{SON} algorithm for \ac{CRE} optimization using results from \cite{CombesInfocom2012}. 
	\item A distributed scheme for optimizing the \ac{ABSr} of the macro \acp{BS}.
	\item Two \ac{SON} algorithms based on \ac{SA} for optimizing \ac{ABSr} using \ac{PF} utility functions.
	\item Performance evaluation of the SON algorithms in a dynamic environment taking into account flow level dynamics of elastic traffic.
	\end{itemize}

The paper is organized as follows. Section II describes the \ac{eICIC} mechanism, and analyzes the necessary conditions for achieving gain using \ac{eICIC}. Two implementation alternatives of \ac{eICIC} are also proposed. Section \ref{sec:son_algorithms} presents the main \ac{SA} algorithms that allows to implement an adaptive optimal \ac{eICIC}. In section \ref{sec:perf_results}, the performance results of those algorithms in a heterogeneous network with realistic traffic are illustrated. Section \ref{sec:conclusion} concludes the paper.

\section{Problem Statement}

\subsection{eICIC mechanisms}
	The deployment of small cells presents technological challenges. The low transmit power of small \acp{BS} and antenna height limits their coverage area and hence their offloading capabilities. To order to overcome this problem, \ac{CRE} can be used.

\subsubsection{\ac{CRE}}
	\ac{CRE} is performed by increasing the \ac{CIO} of the small \ac{BS}. \ac{UE} attachment to a \ac{BS} is determined by comparing the received pilot powers from all surrounding \acp{BS} plus a certain offset which is denoted as \acf{CIO}. The attachment rule for \ac{UE} $u$ can be formulated as follows
\begin{equation}
s^* = \text{argmax}_s \text{CIO}_s h_s^u P_s
\end{equation}
where $s^*$ is the chosen serving cell, $\text{CIO}_s$ - the \ac{CIO} of cell $s$, $P_s$ - its pilot power and $h_s^u$ - the pathloss from \ac{BS} $s$ to \ac{UE} $u$.

\begin{figure}[!ht]
\centering
\includegraphics[width=3in]{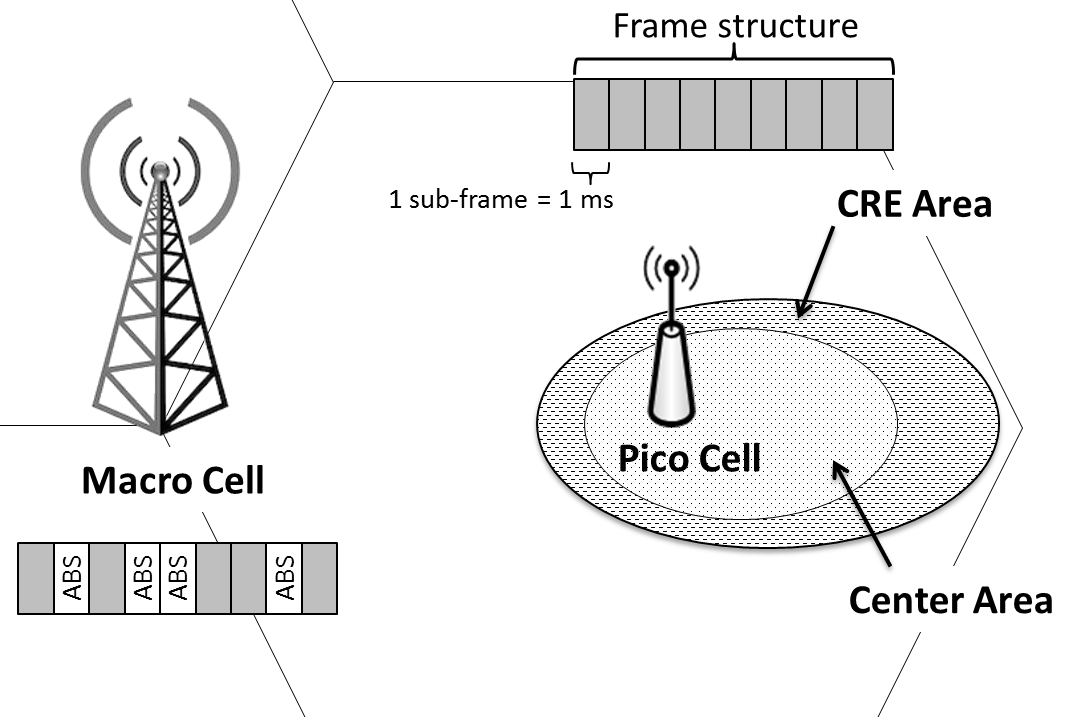}
\caption{Illustration of Cell Range Extension and Almost Blank Sub-Frames in a HetNet}
\label{fig:cre_abs}
\end{figure}

	By setting a nonnegative $\text{CIO}_{dB}$ (\ac{CIO} in dB) at the small \ac{BS} we force the macro users to attach to a small cell which is not their best serving cell as shown in Figure \ref{fig:cre_abs}. We say that these users are in the \ac{CRE} area. Hence the \ac{CRE} allows to increase the offloading of the macro cell. However, the \ac{CRE} users now experience more interference from the macro \ac{BS} which is their best serving cell. The macro cell interference reduces the \ac{SINR} at the \ac{CRE} area thus limiting the extent of small cell offloading. The problem of interference mitigation for \ac{CRE} users has been addressed by 3GPP and the \acp{ABS} mechanism has been proposed.  

\subsubsection{\acp{ABS}}
	The \ac{ABS} mitigation method consists in a time-domain interference avoidance. The goal is to reduce the interference from an aggressor cell (the cell causing the interference, which in our case is the offloaded macro \ac{BS}) by almost blanking out some of its sub-frames as shown in Figure \ref{fig:cre_abs}.

	During the \acp{ABS}, the aggressor cell mutes all of its traffic channels, leaving only certain control channels which are transmitted with reduced power. The \acp{ABS} allows victim cells (namely the interfered small \acp{BS}) to serve their users with almost no interference from the aggressor cell. The residual interference caused by the remaining control signals limits the gain from the \acp{ABS}, and can be further mitigated using additional interference cancelation techniques introduced in 3GPP release 11 \cite[Section 16.1.5]{3gpp.36.300-R11}.
	
	It is advisable to schedule cell edge users during \acp{ABS} because they are the most affected by interference. However, all the small cell users can benefit from \acp{ABS} since their strongest interferer is generally a macro \ac{BS}. The use of \acp{ABS} decreases the available resources of the macro \ac{BS}. Hence a trade-off should be found between the capacity gain of the small cells brought about the \acp{ABS} and the corresponding capacity losses of the macro \ac{BS}.
	
	One may choose to schedule small cell \acp{UE} in the \ac{CRE} area exclusively during the \acp{ABS}, and schedule the other small cell \acp{UE} only during non-\acp{ABS}. Alternatively, one may schedule all small cells' \acp{UE} during both \acp{ABS} and non-\acp{ABS}. We now discuss these two possible implementations.

\subsection{Implementation alternatives}
\subsubsection{Protection of offloaded users} \label{Protectedcase}
	The first implementation for \ac{eICIC} aims at protecting only the offloaded users at the \ac{CRE} area of the small cells.
Users of the small cells are divided into two groups: \ac{CRE} users who are attached to the small cell due to the \ac{CIO}, and center cell users who are attached to the small cell even when the \ac{CIO} is set to $0$dB. The \ac{CRE} users will be served by the small cell during the macro cell \acp{ABS}. Note that this means a strictly positive \ac{CIO} must be accompanied by a strictly positive \ac{ABSr}, otherwise the users in the \ac{CRE} area will never be served. We say that the two parameters (\ac{CIO} and \ac{ABSr}) are coupled.
		
	Consider a \ac{CRE} user $u$ and determine the \ac{SINR} gain when he is offloaded from the macro cell to the small cell and is scheduled during \acp{ABS}. Denote by the subscript $m$ the macro \ac{BS} and by $p$ - the small \ac{BS}. When attached to the macro \ac{BS} $m$, user $u$ has a \ac{SINR} equals to
	\begin{equation}
	S_{u,m} = \frac{h_m(u) P_m}{h_p(u) P_p + C_0(u)}
	\end{equation}
	where $h_m(u)$ and $h_P(u)$ are the pathlosses from the macro cell $m$ to user $u$ and from the small cell $p$ to user $u$, respectively. $P_m$ and $P_p$ are the macro cell and small cell transmit powers, respectively. $C_0(u)$ is the total interference generated by the other nodes in the network (other macro \acp{BS} and small \acp{BS}) plus the thermal noise at the receiver of user $u$. If user $u$ is offloaded to the small \ac{BS} and is served only during the \acp{ABS} of macro cell $m$, then its \ac{SINR} becomes
	\begin{equation}
	S_{u,m} = \frac{h_p(u) P_p(u)}{C_0(u)}
	\end{equation}
	The \ac{SINR} gain for this user can then be written as
	\begin{equation}
	\begin{split} \label{eq:SINR gain}
	SG_u & = \frac{h_p(u) P_p (h_p(u) P_p + C_0(u))}{h_m(u) P_m C_0(u)} \\
	     & = \frac{h_p(u) P_p}{h_m(u) P_m} + \frac{(h_p(u) P_p)^2}{h_m(u) P_m C_0(u)}
	\end{split}
	\end{equation}
	From equation \eqref{eq:SINR gain} we can deduce the following simple condition on the received signals from the different \acp{BS} at every point in the considered area for obtaining an offloading gain using \acp{ABS}
	\begin{equation} \label{eq:interf_max}
		\frac{(h_p(u) P_p)^2}{h_m(u) P_m - h_p(u) P_p} > C_0(u)
	\end{equation}
	If the condition \eqref{eq:interf_max} is not satisfied, offloading will result in performance degradation. Furthermore, there is a maximum \ac{CIO} above which there is no gain for certain small cell users since as one gets further away from the small \ac{BS}, $C_0(u)$ increases compared to the received signal strength from the small \ac{BS}.
	
	Condition \eqref{eq:interf_max} considers that only one macro cell implements \acp{ABS}. If $M$ macro cells implement \acp{ABS}, their interference will be removed from $C_0(u)$ so that condition \eqref{eq:interf_max} becomes
	\begin{equation} \label{eq:gen_interf_max}
		\frac{(h_p(u) P_p)^2}{ \sum_{m=1}^M h_m(u) P_m - h_p(u) P_p} > C_0(u)
	\end{equation}
	which is verified for a larger number of \ac{CRE} users.
	
	We now give a simple example in which we can find at which extent a \ac{CRE} can be performed, i.e. where offloading provides \ac{SINR} gains. We consider a trisector macro cell site with one small \ac{BS} in the coverage area of one of the macro sectors. To take into account neighbours' interference, we add a tier of trisector macro sites to this cluster.

	We focus on the \ac{SINR} gains for users in the \ac{CRE} area. By varying the number of macro \acp{BS} applying \acp{ABS} for the small cell users, we can see the evolution of the maximum \ac{CIO} above which there is a \ac{SINR} degradation at the edge of the small cells. The macros considered for muting are the most interfering ones.

\begin{figure}[!ht]
\centering
\includegraphics[width=3in]{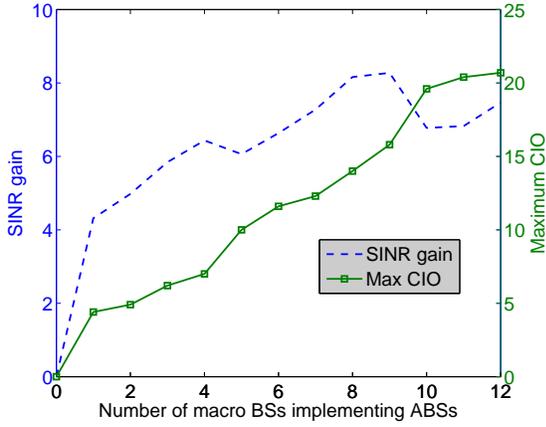}
\caption{Maximum CIO for SINR improvement and SINR gain as a function of the number of macro \acp{BS} applying \acp{ABS}}
\label{fig:max_cio_evolution}
\end{figure}

	Figure \ref{fig:max_cio_evolution} presents the maximum \ac{CIO} as a function of the number of macro \acp{BS} providing \acp{ABS} (red line with squares). It also shows the mean \ac{SINR} gain obtained by the \ac{CRE} users when setting the \ac{CIO} to its maximum value (blue dashed line). The Figure illustrates condition \eqref{eq:gen_interf_max}, namely that muting more macro \acp{BS} allows us to further increase the \ac{CIO}. We can also see that the \ac{SINR} gain obtained by muting more and more macro \acp{BS} eventually saturates. Hence an optimal number of macro \acp{BS} applying \ac{ABS} can be found while keeping complexity low.	
	
	It is noted that if the small cells' load is low, an offloading is still possible without applying the \ac{ABS} which reduces the resources available for the macro cell. We present in the next section an implementation that allows to achieve this case.
	
\subsubsection{Protection of all small cell users}\label{NonProtectedCase}
	Instead of allowing only the \ac{CRE} users to profit from \ac{ABS}, in this implementation all the small cells' users can benefit from the \ac{ABS}, namely they can be scheduled during \ac{ABS} or during normal sub-frames. In this case, offloading by increasing the \acp{CIO} of the small cells without applying \acp{ABS} is also allowed. We can say that the two parameters (\ac{CIO} and \ac{ABSr}) are decoupled. However if needed, the macro cell can provide \ac{ABS} in order to enhance the offloading capability of the small cells. The \ac{SINR} during \acp{ABS} period is the same as that in the previous implementation. Only now, a small cell user will benefit part of the time from \acp{ABS} and the rest of the time will experience normal \ac{SINR}. The mean throughput of a macro user can then be written as
	\begin{equation} \label{eq:mac_ue_thrpt_abs}
		\bar{T}_{u,m} = (1-\theta) \bar{R}_{u,m}
	\end{equation}
where $\theta$ is the \ac{ABSr} of macro $m$ and $\bar{R}_{u,m}$ - the mean data rate of user $u$ when it is served by macro cell $m$.
The mean throughput of a small cell user is
	\begin{equation}\label{eq:small_ue_thrpt_abs}
		\bar{T}_{u,p} = (1-\theta) \bar{R}_{u,p}^{\text{no ABS}} + \theta \bar{R}_{u,p}^{\text{ABS}}
	\end{equation}
	where $\theta$ is the \ac{ABSr} available to small cell $p$, $\bar{R}_{u,p}^{\text{no ABS}}$ and $\bar{R}_{u,p}^{\text{ABS}}$ are the average data rates of user $u$ over time when served by small cell $p$ outside and during the \acp{ABS} respectively.
	
	An efficient setting of the \ac{ABSr} is needed to optimize a function of the users' throughputs. A condition for achieving an offloading gain using \acp{ABS} could be the increase of the total throughput of the considered cluster (macro cells and small cells)
	\begin{equation}\label{eq:ABS loose condition}
	\begin{split}
		& \sum_{m \in \mathcal{M}} \sum_{u \in m} (1-\theta) \bar{R}_{u,m} \\
		& + \sum_{p \in \mathcal{P}} \sum_{u \in p} (1-\theta) \bar{R}_{u,p}^{\text{no ABS}} + \theta \bar{R}_{u,p}^{\text{ABS}} \\
		& \geq \sum_{m \in \mathcal{M}} \sum_{u \in m}\bar{R}_{u,m} + \sum_{p \in \mathcal{P}} \sum_{u \in p} \bar{R}_{u,p}^{\text{no ABS}}
	\end{split}	
	\end{equation}
	where $\mathcal{M}$ and $\mathcal{P}$ are the set of all macro and small \acp{BS} involved in the mechanisms (\ac{CRE} and \ac{ABS}).
	
	Condition \eqref{eq:ABS loose condition} may be too restrictive as we may want to increase the \ac{CET} at the expense of a decrease in total throughput. In the next section, we propose self-optimization algorithms based on the two implementations described in this section, using a PF utility of \ac{UE} throughputs as objective function.
	
\section{SON Algorithms} \label{sec:son_algorithms}

\subsection{Load Balancing SON}
\ac{LB} \ac{SON} adjusts the small \acp{BS} \acp{CIO} in order to balance the loads between a macro \ac{BS} and the small \acp{BS}. The idea is to increase the small cell whenever its load is lower than that of the macro \ac{BS} in the coverage area of which it is located (and vice versa). If we consider traffic offloading from macro $m$ to a small cell $s$, the \acp{ODE} defining the \ac{LB} \ac{SON} mechanism at the small \ac{BS} is defined by
\begin{equation}\label{LBode}
\dot{CIO}_{{\text{dB}}_s} = \rho_m({\text{dB}}_s) - \rho_s({\text{dB}}_s)
\end{equation}
where ${CIO_{\text{dB}}}_s$ is the \ac{CIO} of small \ac{BS} $s$ in decibels, $\rho_m$ - the load of the macro \ac{BS} $m$, and $\rho_s$ - the load of the small \ac{BS} $s$.
	The \ac{SA} update equation defining the \ac{SON} algorithm for \eqref{LBode} reads
\begin{equation} \label{eq:LB_SA}
	CIO_s(k+1) = CIO_s(k) + \epsilon (\hat{\rho}_m(k)-\hat{\rho}_s(k))
\end{equation}
where $\hat{\rho}_m$ and $\hat{\rho}_s$ are the estimators of $\rho_m$ and $\rho_s$ obtained by averaging the resource utilization of the respective \ac{BS} over a certain time period. This SON converges to a set on which all loads are equal as shown in \cite[Theorem 4]{CombesInfocom2012}.
	
	In practice, a projected \ac{SA} algorithm will be used instead of \eqref{eq:LB_SA} because the \acp{CIO} are restricted to a maximum value. The restriction is mainly due to the fact that offloaded users suffer from residual interference caused by remaining control channels during \acp{ABS}. However, the convergence of the \ac{SA} remains valid even in this case (see \cite[§5.4]{Borkar}).

\subsection{ABS ratio optimization}
	We choose to implement the \ac{ABSrO} algorithm at the small \ac{BS} which then requests appropriate \ac{ABSr} from its interfering macro \acp{BS}. The macro \acp{BS} receives \ac{ABSr} requests from the small cells it interferes and then applies the maximum \ac{ABSr} among these requests. The \ac{ABSr} optimization algorithm should then take into account load or traffic conditions (i.e. number of users present in the cell) of all the macro cells from which the considered small cell will be requesting \acp{ABS}.
	
	The cluster of \acp{BS} considered for a single \ac{ABSrO} algorithm comprises a small cell $p$ on which the algorithm is implemented, and the most interfering $M$ macro cells with small cell $p$. Typically $M=1$ is sufficient if the small cell is in the center of the macro cell, namely by periodically muting only one macro cell, we increase significantly the \ac{SINR} of the small cell users. When the small cell is located at the cell edge, choosing $M=3$ provides better results.

\subsubsection{Only \ac{CRE} users are protected} \label{protected_case_son}
	Using \ac{CRE}, the small \acp{BS} are able to offload traffic of the macro \ac{BS}. The offloaded users at the small cell edge are highly interfered by the macro that previously served them. \acp{ABS} are used to mitigate the interference enabling the small cells to offload even more the macro cell traffic.	
	
	The objective function considered in this implementation is the \ac{PF} defined as follows
	\begin{equation}
	\begin{split}
		U_{PF1}(\theta) = & \sum_{m=1}^M \sum_{u \in m} \log( (1-\theta) \bar{R}_{u,m}) \\
		                + & \sum_{u \in \text{center of } p} \log((1-\theta) \bar{R}_{u,p}^{\text{no ABS}}) \\
		                + & \sum_{u \in \text{CRE of } p} \log( \theta \bar{R}_{u,p}^{\text{ABS}})
		\end{split}
	\end{equation}
	where we considered the $M$ most interfering macro \acp{BS} users throughputs, the small cell center user throughputs and the small cell \ac{CRE} area users throughputs.
	The \ac{PF} utility enables us to maximize the users throughput and to enforce fairness among them. The \ac{SON} algorithm applied to optimize this PF utility is given in the following theorem
	
\begin{theorem} \label{th:absr_pf1}
	Given some positive step sizes $\epsilon_k$ non-summable ($\sum_{k=0}^\infty \epsilon_k = \infty$) but square-summable ($\sum_{k=0}^\infty \epsilon_k^2 < \infty$) and the update equation
	\begin{equation} \label{eq:ABSrO_PF1_SA}
		\theta_{k+1} = \theta_k + \epsilon_k \left(\frac{N_{p,\textit{CRE}}}{\theta} - \frac{N_{p,\textit{CEN}} + \sum_{m=1}^M N_m}{1-\theta} \right)
	\end{equation}
	where $N_m$, $N_{p,\textit{CEN}}$ and $N_{p,\textit{CRE}}$ are the mean numbers of active users in cell $m$, the center of cell $p$ and the \ac{CRE} area of cell $p$, respectively,
	
then $\theta_k$ converges to a set on which $U_{PF1}()$ is maximal.
\end{theorem}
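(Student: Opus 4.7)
The plan is to recognize (\ref{eq:ABSrO_PF1_SA}) as a noisy gradient ascent on $U_{PF1}$ and then invoke the ODE method for stochastic approximation, following \cite[§5.4]{Borkar}. First I would differentiate $U_{PF1}$ in $\theta$ term by term. Writing $\log((1-\theta)\bar{R}) = \log(1-\theta) + \log \bar{R}$ and similarly $\log(\theta \bar{R})$, and counting how many such summands appear in each of the three groups, one obtains
\begin{equation}
\frac{\partial U_{PF1}}{\partial \theta}(\theta) = \frac{N_{p,\textit{CRE}}}{\theta} - \frac{N_{p,\textit{CEN}} + \sum_{m=1}^{M} N_m}{1-\theta},
\end{equation}
which matches exactly the drift in (\ref{eq:ABSrO_PF1_SA}). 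Hence the scheme has the form $\theta_{k+1} = \theta_k + \epsilon_k\bigl(\nabla U_{PF1}(\theta_k) + M_{k+1}\bigr)$, where $M_{k+1}$ is the discrepancy between the instantaneous per-slot user counts used to compute the right-hand side and their stationary means.

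Second, I would establish that $U_{PF1}$ is strictly concave on $(0,1)$: each summand is the logarithm of an affine function of $\theta$ with nonzero slope, and the second derivative $-N_{p,\textit{CRE}}/\theta^2 - (N_{p,\textit{CEN}}+\sum_m N_m)/(1-\theta)^2$ is strictly negative. Therefore $U_{PF1}$ has a unique maximiser $\theta^{\star}\in(0,1)$, and serves as a strict Lyapunov function for the limit ODE $\dot\theta = \nabla U_{PF1}(\theta)$, which is globally asymptotically stable at $\theta^{\star}$.

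Third, as in the \ac{LB} \ac{SON} case, I would use the projected version of (\ref{eq:ABSrO_PF1_SA}) on a compact interval $[\delta,1-\delta]\subset(0,1)$ to keep the iterates bounded away from the two singularities, ensuring the drift and its noise have finite moments. Combined with the step-size hypotheses already in the statement and the martingale-difference character of $M_{k+1}$ (which follows from the boundedness of the active user counts in any admissible traffic model), \cite[Theorem 2, §5.4]{Borkar} delivers almost-sure convergence of $\theta_k$ to the set of ODE equilibria, i.e.\ to $\{\theta^{\star}\}$, which is exactly the set on which $U_{PF1}$ is maximal.

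The main obstacle is the implicit two-time-scale nature of the problem: the mean user counts $N_m$, $N_{p,\textit{CEN}}$, $N_{p,\textit{CRE}}$ themselves depend on $\theta$ through the flow-level dynamics induced by the scheduler and the \ac{ABSr}. The limiting vector field is thus defined only through the stationary law of a $\theta$-parametrised queueing process, and one must argue that this inner process mixes sufficiently fast compared to the slow $\epsilon_k$ updates so that the SA sees the stationary mean at the current $\theta$. This is the classical averaging argument underlying \cite{Borkar}, and is where the bulk of the technical verification lies; once it is in place, the Lyapunov/concavity argument above finishes the proof.
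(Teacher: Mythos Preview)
Your proposal is correct and follows essentially the same approach as the paper: split the logs to isolate the $\theta$-dependence, identify the drift in \eqref{eq:ABSrO_PF1_SA} as $\partial U_{PF1}/\partial\theta$, observe concavity, and invoke the ODE method of \cite{Borkar}. You are in fact more careful than the paper on two points the authors leave implicit---the projection onto $[\delta,1-\delta]$ and the averaging/two-time-scale argument justifying that the instantaneous counts track their stationary means---but the skeleton of the argument is identical.
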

\begin{proof}
	See Appendix \ref{app:absr_pf1}.
\end{proof}
	
	Note that the optimal $\theta$ can be directly derived in this case using \eqref{eq:pf1_der} \cite[Eq. 8]{vasudevandynamic}, and is equal to

	\begin{equation}
	\theta^* = \frac{N_{p,\textit{CRE}}}{N_{p,\textit{CRE}}+N_{p,\textit{CEN}} + \sum_{m=1}^M N_m}
	\end{equation}
	
	The reason we use a \ac{SA} algorithm instead of setting optimal $\theta$ is to keep a certain stability in the parameter configuration which can be quite critical in real networks. Figure \ref{fig:theta_saVSopt} illustrates this statement. Setting optimal \ac{ABSr} at each event in the network (arrival or departure of a user) yields an extremely fluctuating parameter whereas the \ac{SA} approach allows us to freeze the parameter at convergence when the traffic is stationary giving the same performance results.

\begin{figure}[!ht]
\centering
\includegraphics[width=3in]{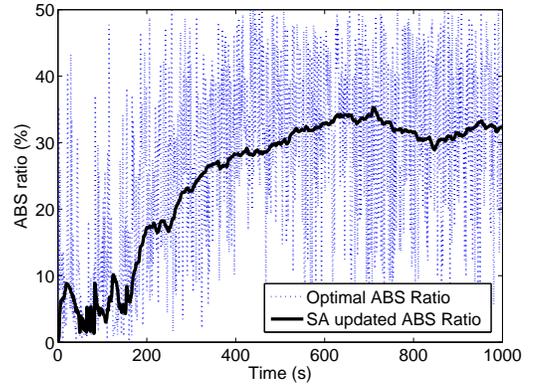}
\caption{Time evolution of ABS ratio using stochastic approximation (solid line) and optimal solution (dashed line)}
\label{fig:theta_saVSopt}
\end{figure}

Performance results obtained using equation \eqref{eq:ABSrO_PF1_SA} are presented in Section \ref{sec:perf_results}. We now proceed with the second implementation for \ac{eICIC}.

\subsubsection{All small cell users benefit from \acp{ABS}}
	In this case, the user throughputs are modified according to equations \eqref{eq:mac_ue_thrpt_abs} and \eqref{eq:small_ue_thrpt_abs}.
	
The \ac{PF} utility of the user throughputs is redefined as
	\begin{equation} \label{eq:U_PF2_exact}
	\begin{split}
		U_{\text{PF2\_exact}}&(\theta) = \sum_{m=1}^M \sum_{u \in m} \log((1- \theta) \bar{R}_{u,m}) \\
		                & + \sum_{u \in p} \log((1-\theta) \bar{R}_{u,p}^{\text{no ABS}} + \theta \bar{R}_{u,p}^{\text{ABS}})
	\end{split}	
	\end{equation}
 since we make no distinction between \ac{CRE} users of the small cell and its normal users. The \ac{SON} algorithm for optimizing the utility function \eqref{eq:U_PF2_exact} is presented in the following theorem.
	
\begin{theorem} \label{th:absr_pf2_exact}
Given some positive step sizes $\epsilon_k$ non-summable ($\sum_{k=0}^\infty \epsilon_k = \infty$) but square-summable ($\sum_{k=0}^\infty \epsilon_k^2 < \infty$) and the update equation
	\begin{equation}  \label{eq:ABSrO_PF2_exact_SA}
		\theta_{k+1} = \theta_k + \epsilon_k \left(\sum_{u \in p} \frac{1}{\theta_k + \frac{\bar{R}_{u,p}^{\text{no ABS}}}{\bar{R}_{u,p}^{\text{ABS}} - \bar{R}_{u,p}^{\text{no ABS}}}} - \frac{\sum_{m=1}^M N_m}{1-\theta_k} \right)
	\end{equation}
	where $N_m$ is the mean number of active users in cell $m$,
	
	then $\theta_k$ converges to a set on which $U_{\text{PF2\_exact}}()$ is maximal.
\end{theorem}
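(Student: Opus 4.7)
The plan is to recognize \eqref{eq:ABSrO_PF2_exact_SA} as a noisy gradient ascent on $U_{\text{PF2\_exact}}$ and then invoke the ODE method of stochastic approximation (Kushner or Borkar) to deduce convergence to the maximizing set. First I would compute $dU_{\text{PF2\_exact}}/d\theta$ directly. Differentiating the macro term gives $-\sum_{m=1}^{M} N_m/(1-\theta)$, while for each small-cell user $u$ the derivative of $\log\bigl((1-\theta)\bar{R}_{u,p}^{\text{no ABS}} + \theta\bar{R}_{u,p}^{\text{ABS}}\bigr)$ equals $(\bar{R}_{u,p}^{\text{ABS}}-\bar{R}_{u,p}^{\text{no ABS}})/\bigl((1-\theta)\bar{R}_{u,p}^{\text{no ABS}} + \theta\bar{R}_{u,p}^{\text{ABS}}\bigr)$. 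Dividing numerator and denominator by $\bar{R}_{u,p}^{\text{ABS}}-\bar{R}_{u,p}^{\text{no ABS}}$ (assuming this is positive, which is the case of interest since \ac{ABS} reduces macro interference) yields exactly the form $1/\bigl(\theta + \bar{R}_{u,p}^{\text{no ABS}}/(\bar{R}_{u,p}^{\text{ABS}}-\bar{R}_{u,p}^{\text{no ABS}})\bigr)$ appearing in \eqref{eq:ABSrO_PF2_exact_SA}. Hence the drift of the update equals $dU_{\text{PF2\_exact}}/d\theta$.

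Next I would establish concavity of $U_{\text{PF2\_exact}}$ on $(0,1)$. The macro term $\log((1-\theta)\bar{R}_{u,m})$ is concave as a composition of $\log$ with an affine decreasing map, and each small-cell term $\log\bigl((1-\theta)\bar{R}_{u,p}^{\text{no ABS}}+\theta\bar{R}_{u,p}^{\text{ABS}}\bigr)$ is concave as $\log$ of an affine function. A finite sum of concave functions is concave, so the set of maximizers is a non-empty closed interval (a single point when at least one user contributes a strictly concave term).

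Then I would frame the update as a Robbins--Monro iteration $\theta_{k+1}=\theta_k+\epsilon_k\bigl(h(\theta_k)+\delta M_{k+1}\bigr)$, with $h(\theta)=dU_{\text{PF2\_exact}}/d\theta$ and $\delta M_{k+1}$ the zero-mean noise stemming from using instantaneous estimators of $N_m$, $\bar{R}_{u,p}^{\text{no ABS}}$, $\bar{R}_{u,p}^{\text{ABS}}$ in place of their expectations. The step-size hypotheses $\sum\epsilon_k=\infty$, $\sum\epsilon_k^2<\infty$ are exactly those required by the standard ODE theorem \cite[Thm.~2, Ch.~2]{Borkar}. Because $\theta$ must remain in $[\theta_{\min},\theta_{\max}]\subset(0,1)$ in practice, I would state the algorithm as a projected version, so that boundedness of the iterates is automatic and the limit set lies in the Kuhn--Tucker set of $\max U_{\text{PF2\_exact}}$ on the feasible interval \cite[§5.4]{Borkar}. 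The associated ODE $\dot\theta=h(\theta)$ is gradient ascent on a concave function, whose trajectories converge to $\arg\max U_{\text{PF2\_exact}}$, which serves as a global Lyapunov set for the dynamics.

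The main obstacle is the noise analysis: verifying that the estimators of $N_m$ and of $\bar{R}_{u,p}^{\text{no ABS}}$, $\bar{R}_{u,p}^{\text{ABS}}$ produce a martingale-difference noise sequence with bounded conditional second moments, despite the fact that the user population and channel conditions evolve with $\theta_k$. I would handle this by appealing to a two-timescale argument, assuming the traffic/CSI averaging at each SA step is sufficiently fast (small $\epsilon_k$) relative to the flow-level dynamics so that the estimators track their stationary means; this is the same argument underpinning Theorem~\ref{th:absr_pf1} and \cite[Thm.~4]{CombesInfocom2012}, and it lets me invoke the ODE theorem to conclude $\theta_k\to\arg\max U_{\text{PF2\_exact}}$ almost surely. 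Details are deferred to the appendix.
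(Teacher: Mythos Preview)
Your proposal is correct and follows essentially the same route as the paper: compute $\partial U_{\text{PF2\_exact}}/\partial\theta$, note that $U_{\text{PF2\_exact}}$ is concave as a sum of logs of affine functions, and invoke the ODE method of stochastic approximation (Borkar) to conclude that the gradient-ascent dynamics converge to the maximizer. Your write-up is in fact more detailed than the paper's, which simply records the derivative \eqref{eq:pf2_exact_der}, asserts concavity, and defers the remaining stochastic-approximation argument to the proof of Theorem~\ref{th:absr_pf1}.
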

\begin{proof}
	See Appendix \ref{app:absr_pf2_exact}.
\end{proof}
	
	The update equation \eqref{eq:ABSrO_PF2_exact_SA}, %\eqref{eq:U_PF2_exact}
requires the knowledge of the average data rates of all pico users, rendering practical implementation of the algorithm more complex. To simplify \eqref{eq:ABSrO_PF2_exact_SA}, we choose to maximize a lower bound of \eqref{eq:U_PF2_exact} instead. The new objective function is written as
\begin{equation} \label{eq:U_PF2}
	\begin{split}
		U_{PF2}(\theta) = & \sum_{m=1}^M \sum_{u \in m} \log((1- \theta) \bar{R}_{u,m}) \\
		                + & \sum_{u \in p} \frac{1}{2}\log(2 (1-\theta) \bar{R}_{u,p}^{\text{no ABS}}) \\
		                + & \sum_{u \in p} \frac{1}{2} \log( 2 \theta \bar{R}_{u,p}^{\text{ABS}})
	\end{split}	
\end{equation}

\begin{lemma}
	Let us consider $M$ macro cells and one small cell indexed $p$.

	Then $\forall \theta \in ]0, 1[$ we have
\begin{equation}
U_{PF2}(\theta) \leq U_{\text{PF2\_exact}}(\theta)
\end{equation}
\end{lemma}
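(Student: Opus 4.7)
The plan is to reduce the inequality term-by-term and then recognize it as the AM-GM inequality. First, I would observe that the macro-cell contribution
\[
\sum_{m=1}^M \sum_{u \in m} \log\bigl((1-\theta)\bar{R}_{u,m}\bigr)
\]
appears identically in both $U_{PF2}(\theta)$ and $U_{\text{PF2\_exact}}(\theta)$, so it cancels in the difference. Thus it suffices to prove, for every small-cell user $u \in p$,
\[
\tfrac{1}{2}\log\bigl(2(1-\theta)\bar{R}_{u,p}^{\text{no ABS}}\bigr) + \tfrac{1}{2}\log\bigl(2\theta \bar{R}_{u,p}^{\text{ABS}}\bigr) \leq \log\bigl((1-\theta)\bar{R}_{u,p}^{\text{no ABS}} + \theta \bar{R}_{u,p}^{\text{ABS}}\bigr).
\]

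Next I would substitute $a := (1-\theta)\bar{R}_{u,p}^{\text{no ABS}}$ and $b := \theta \bar{R}_{u,p}^{\text{ABS}}$, which are strictly positive for $\theta \in (0,1)$ under the natural assumption that the rates are positive. The left-hand side simplifies to
\[
\tfrac{1}{2}\log(2a) + \tfrac{1}{2}\log(2b) = \log 2 + \tfrac{1}{2}\log(ab) = \log\bigl(2\sqrt{ab}\bigr),
\]
so the inequality to establish reduces to $2\sqrt{ab} \leq a+b$, which is exactly the AM-GM inequality for two nonnegative reals. Applying the monotonicity of $\log$ then yields the per-user bound, and summing over $u \in p$ and adding back the common macro-cell term completes the proof.

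I do not anticipate a real obstacle here; the whole argument is essentially a one-line AM-GM after stripping the common macro contribution. The only thing to be mildly careful about is the factors of $2$ inside the logarithms in the definition of $U_{PF2}$, which are precisely what is needed to produce the $2\sqrt{ab}$ on the left-hand side so that AM-GM applies cleanly. It is also worth remarking that equality holds iff $a=b$, i.e.\ iff $(1-\theta)\bar{R}_{u,p}^{\text{no ABS}} = \theta \bar{R}_{u,p}^{\text{ABS}}$ for every small-cell user simultaneously, which explains why $U_{PF2}$ is a genuine lower bound and not an exact surrogate.
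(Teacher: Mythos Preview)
Your proposal is correct and follows essentially the same route as the paper: both reduce to the per-user inequality with $a=(1-\theta)\bar{R}_{u,p}^{\text{no ABS}}$ and $b=\theta\bar{R}_{u,p}^{\text{ABS}}$. The paper finishes by invoking Jensen's inequality for the convex function $-\log$ with equal weights, which is exactly AM--GM in this two-point case, so the two arguments are equivalent.
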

\begin{proof}
	For the proof it suffices to show that for a small cell user $u$,
	\begin{equation}
	\begin{split}	
	\log(2 (1-\theta) \bar{R}_{u,p}^{\text{no ABS}}) + \log( 2 \theta \bar{R}_{u,p}^{\text{ABS}}) \\
	\leq 2 \log((1-\theta) \bar{R}_{u,p}^{\text{no ABS}} + \theta \bar{R}_{u,p}^{\text{ABS}})
	\end{split}
	\end{equation}
	For ease of notation, denote $a = (1-\theta) \bar{R}_{u,p}^{\text{no ABS}}$ and $b = \theta \bar{R}_{u,p}^{\text{ABS}}$. We want to show that $\log 2 a + \log 2 b \leq 2 \log(a+b)$. Using Jensen's inequality \cite{jensen1906fonctions} for the function $-\log$ which is convex we have
	\begin{equation} \label{eq:jensen_simple}
	- \log \left( \frac{1}{2} (2a) + \frac{1}{2} (2b) \right) \leq - \frac{1}{2} \log(2a) - \frac{1}{2} \log(2b)
	\end{equation}
	By taking the negative of \eqref{eq:jensen_simple}, we obtain the desired result.
\end{proof}
	The \ac{SON} algorithm optimizing \eqref{eq:U_PF2} is presented in the following theorem.
	
\begin{theorem} \label{th:absr_pf2}
	Given some positive step sizes $\epsilon_k$ non-summable ($\sum_{k=0}^\infty \epsilon_k = \infty$) but square-summable ($\sum_{k=0}^\infty \epsilon_k^2 < \infty$) and the update equation
	\begin{equation}  \label{eq:ABSrO_PF2_SA}
		\theta_{k+1} = \theta_k + \epsilon_k \left(\frac{N_p}{2 \theta_k} - \frac{\frac{N_p}{2} + \sum_{m=1}^M N_m}{1-\theta_k} \right)
	\end{equation}
	where $N_m$ and $N_p$ are the mean numbers of active users in cells $m$ and $p$ respectively,
	
	then $\theta_k$ converges to a set on which $U_{PF2}()$ is maximal.
\end{theorem}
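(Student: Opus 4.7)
The plan is to recognize that the recursion \eqref{eq:ABSrO_PF2_SA} is a noisy gradient ascent on $U_{PF2}$ and then apply a standard stochastic approximation convergence theorem in the style of Kushner or Borkar, just as was done for Theorems \ref{th:absr_pf1} and \ref{th:absr_pf2_exact}. Concretely, I would first differentiate $U_{PF2}(\theta)$ term by term. The macro contribution gives $-\sum_{m=1}^M N_m/(1-\theta)$, the non-ABS small-cell term contributes $-N_p/[2(1-\theta)]$, and the ABS small-cell term contributes $N_p/(2\theta)$. Summing yields
\begin{equation}
\frac{d U_{PF2}}{d\theta}(\theta) = \frac{N_p}{2\theta} - \frac{N_p/2 + \sum_{m=1}^M N_m}{1-\theta},
\end{equation}
which is exactly the drift appearing in \eqref{eq:ABSrO_PF2_SA}. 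Therefore the associated ODE is $\dot\theta = \nabla U_{PF2}(\theta)$.

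Next I would check concavity to guarantee a unique maximizer. Each term of $U_{PF2}$ is the logarithm of an affine function of $\theta$ (either $1-\theta$ or $\theta$), hence concave; the sum is strictly concave on $(0,1)$ because the $\log \theta$ and $\log(1-\theta)$ pieces have strictly negative second derivative. The unique critical point, obtained by setting the above derivative to zero, is
\begin{equation}
\theta^\star = \frac{N_p/2}{N_p + \sum_{m=1}^M N_m},
\end{equation}
which is the global maximum and a globally asymptotically stable equilibrium of the limiting ODE on any compact subinterval of $(0,1)$.

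I would then verify the hypotheses of the SA convergence theorem. The step sizes satisfy the Robbins--Monro conditions by assumption. The measurements $\hat{N}_m(k)$ and $\hat{N}_p(k)$ used in place of $N_m,N_p$ produce a noise term that can be written as a martingale difference plus a vanishing bias, exactly as argued in the appendices for Theorems \ref{th:absr_pf1} and \ref{th:absr_pf2_exact}; the second moment is bounded because the numbers of active users in a stable queueing system are tight. Finally, since the drift blows up at the boundary $\theta\in\{0,1\}$, the iteration must be implemented as a projected SA on a compact interval $[\theta_{\min},\theta_{\max}]\subset(0,1)$, and the corresponding projected ODE has the same equilibrium $\theta^\star$ provided the bounds are loose enough; convergence of the projected scheme then follows from \cite[§5.4]{Borkar}.

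The main obstacle is the boundary behaviour: the drift $h(\theta)$ is unbounded near $\theta=0$ and $\theta=1$, so classical SA theorems do not apply directly. Handling this via projection (and checking that $\theta^\star$ lies in the interior of the projection set, so the projection is inactive at equilibrium) is the delicate point; the remaining verifications are essentially identical to those already carried out for Theorem \ref{th:absr_pf1}, since the drift has the same rational form.
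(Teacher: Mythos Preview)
Your proposal is correct and follows essentially the same route as the paper: compute $\partial U_{PF2}/\partial\theta$, observe that the update \eqref{eq:ABSrO_PF2_SA} is a stochastic gradient ascent whose limiting ODE is $\dot\theta=\nabla U_{PF2}(\theta)$, invoke concavity of $U_{PF2}$, and appeal to the stochastic approximation result recalled in Appendix~\ref{app:rem_sa}. You add more detail than the paper does (the explicit maximizer $\theta^\star$, the martingale-noise verification, and the projection argument for the boundary), but these are refinements of the same argument rather than a different approach.
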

\begin{proof}
	See Appendix \ref{app:absr_pf2}.
\end{proof}
	
	The reasons for implementing a \ac{SA} algorithm instead of setting the optimal \ac{ABS} ratio (which can be easily derived) are the same as those stated in the previous section.
	
	The \ac{SON} algorithm \eqref{eq:ABSrO_PF2_SA} presents certain advantages over the one in \eqref{eq:ABSrO_PF1_SA}. The first one is that we do not need to keep two counters for the numbers of active users of the small \ac{BS}, but only one for the total number of users. The second one is that if the small \ac{BS} has a very low load compared to those of its surrounding macro \acp{BS}, the \ac{ABSr} provided by those macro \acp{BS} can be also low, thus preserving the resources of the macro \acp{BS}. Furthermore, \eqref{eq:ABSrO_PF2_SA} is completely decoupled from the load balancing \ac{SON}, whereas in \eqref{eq:ABSrO_PF1_SA}, a positive \ac{CIO} requires a corresponding \ac{ABSr}. In the next section, we compare the performance results of the different \ac{SON} algorithms.

\section{Simulation Results} \label{sec:perf_results}
\subsection{Simulation scenario}
	Consider a trisector \ac{BS} surrounded by 6 interfering macro sites. In each macro sector 4 small \acp{BS} are deployed close to the cell edge (see Figure \ref{fig:net_lay}). We consider elastic traffic where users arrive in the network according to a Poisson process, download a file and leave the network as soon as their download is complete. The considered area $A$ is the initial area covered by the macro - and the small \acp{BS}. Two layers of traffic are superposed: the first one has a uniform arrival rate of $\lambda$ users/s all over $A$, and the second - a uniform arrival rate of $\lambda_h$ users/s in the initial area covered by the small cells (with all \acp{CIO} set to 0dB). This is close to a realistic scenario where small cells are deployed in hotspot areas.
	
	Following the \ac{SINR} gain results presented in Figure \ref{fig:max_cio_evolution}, we choose three macro \acp{BS} for \ac{ABSrO} ($M = 3$). Each small \ac{BS} requests \acp{ABS} from its three most interfering macro \acp{BS} and take their load conditions (number of users to serve) into account in the \ac{ABSrO} algorithms. To avoid truncation effects of the computational area in the simulations, we suppose that the surrounding macro \acp{BS} serve 5 active users on the average.

\begin{figure}[!ht]
\centering
\includegraphics[width=3in]{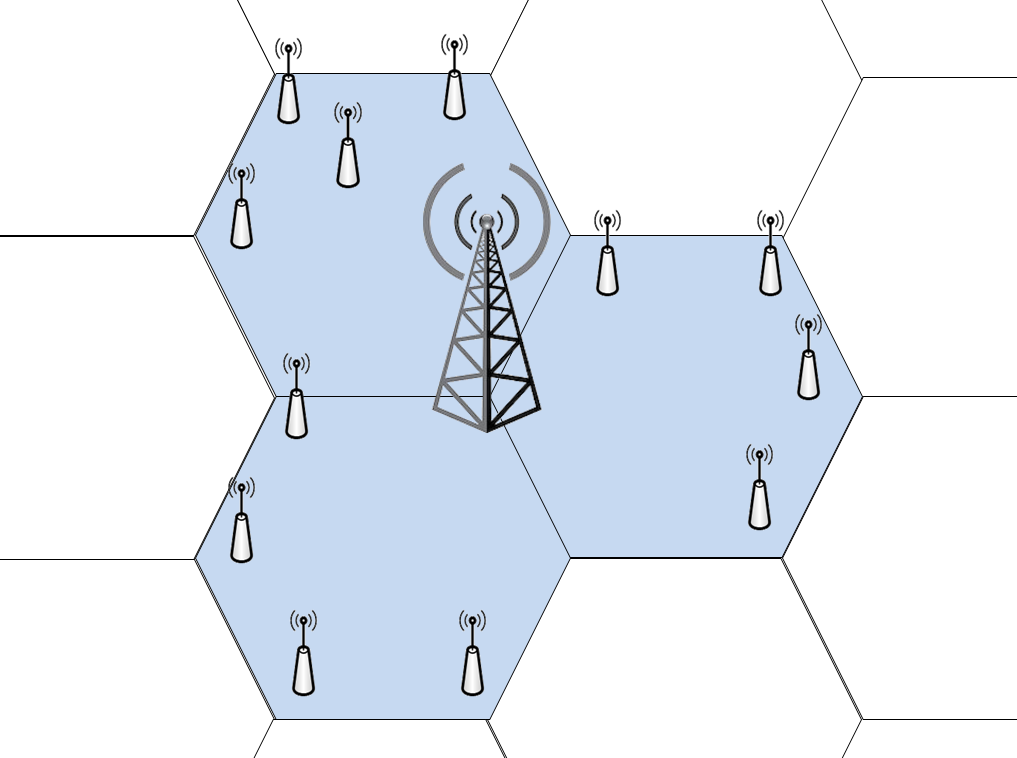}
\caption{Network layout scenario}
\label{fig:net_lay}
\end{figure}

	We use the propagation models for macro - and small \acp{BS} (following \cite[Page 61]{3gpp.36.814-R9}) presented in Table \ref{tab:params} which also summarizes all the simulation parameters.
	
\begin{table}[!t]
\renewcommand{\arraystretch}{1.3}
\caption{Network and Traffic characteristics}
\label{tab:params}
\centering
\begin{tabular}{|c|c|}
\hline
\multicolumn{2}{|c|}{Network parameters} \\
\hline
Number of macro BSs & 3 \\
\hline
Number of small BSs & 12 \\
\hline
Number of interfering macros & 6 $\times$ 3 sectors \\
\hline
Macro Cell layout & hexagonal trisector \\
\hline
Small Cell layout & hexagonal omni \\
\hline
Intersite distance & 500 m \\
\hline
Bandwidth & 10MHz \\
\hline
\multicolumn{2}{|c|}{Channel characteristics} \\
\hline
Thermal noise & -174 dBm/Hz \\
\hline
Macro Path loss (d in km) & 128.1 + 37.6 $\log_{10}(d)$ dB \\
%previously 128 + 36.4
\hline
small cell Path loss (d in km) & 140.7 + 36.7 $\log_{10}(d)$ dB \\
\hline
\multicolumn{2}{|c|}{Traffic characteristics} \\
\hline
Traffic spatial distribution & uniform \\
\hline
$\lambda$ & $\text{14 users/s/km}^2$ \\
\hline
$\lambda_h$ & $\text{6 users/s/km}^2$ \\
\hline
Service type & FTP \\
\hline
Average file size & 10 Mbits \\
\hline
\end{tabular}
\end{table}
	
\subsection{Performance Evaluation}
	We first compare the performance using the exact PF utility function \eqref{eq:ABSrO_PF2_exact_SA} and the approximate PF utility \eqref{eq:ABSrO_PF2_SA} for the second implementation (namely all small cell users can benefit from \acp{ABS}). The results are presented in Figure \ref{fig:pf2_cdf} in terms of the Cumulative Distribution Function (CDF) of user throughputs. We can see that the exact utility (dashed blue line) gives slightly better \acp{MUT} than the approximated PF utility (red stars curve), although the difference is not significant. We therefore consider in the following only the approximated algorithm \eqref{eq:ABSrO_PF2_SA}, motivated by its much lower implementation complexity.
	
\begin{figure}[!ht]
\centering
\includegraphics[width=3in]{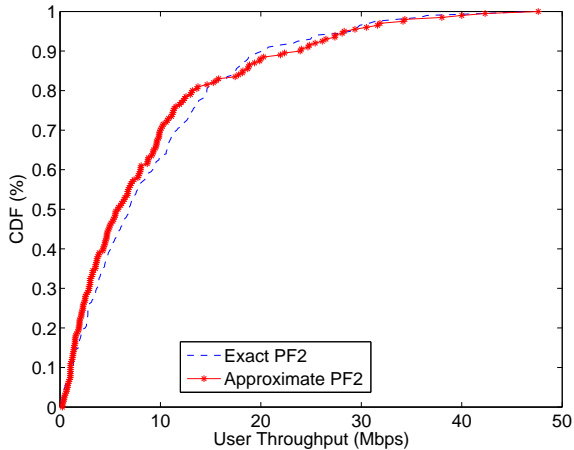}
\caption{Comparison of user throughput CDF using the exact PF2 \eqref{eq:ABSrO_PF2_SA} - and the approximated PF2 \eqref{eq:ABSrO_PF2_exact_SA} utility functions}
\label{fig:pf2_cdf}
\end{figure}

	We next compare the performance results for the following four cases:
	\begin{enumerate}
	\item Case 1: The initial network with no \ac{SON} enabled, denoted hereafter as 'NoSON'. This case serves as a reference.
	\item Case 2: The load balancing \ac{SON} only, using algorithm \eqref{eq:LB_SA}, and denoted as 'LBonly'.
	\item Case 3: The load balancing and \ac{ABSrO} algorithms are enabled under implementation 1 of \ac{eICIC} with \ac{ABSrO} algorithm \eqref{eq:ABSrO_PF1_SA} using the \ac{PF} utility function. This case is denoted as 'PF1'.
	\item Case 4: The load balancing and \ac{ABSrO} algorithms are enabled under implementation 2 of \ac{eICIC} with \ac{ABSrO} algorithm \eqref{eq:ABSrO_PF2_SA} using the approximated PF utility function. This case will be denoted as 'PF2approx'.
	\end{enumerate}
	
	The chosen performance indicators are the global \ac{CET} (Figure \ref{fig:cet}), the global \ac{MUT} (Figure \ref{fig:mut}) and the maximum loads among the macro - and the small cells (Figure \ref{fig:max_loads}). The performance indicators are calculated  over the coverage area of the 3 macro cells implementing the \ac{eICIC}, which includes the 12 small cells.
	
	Figure \ref{fig:max_loads} shows that in all cases where \ac{SON} is implemented, the loads are more or less balanced between macro cells (brown bars) and small cells (white bars). Interestingly, the 'LBonly' case completely balances the loads since it is the only objective pursued in this case and the operating point where all the loads are balanced is feasible (the maximum \ac{CIO} is not limiting).
It is noted that in the 'PF1' case, the increase of \acp{CIO} is accompanied by the increase of \ac{ABSr} of the interfering macro cells, resulting in higher loads than in the 'PF2approx' case.
	
Figure \ref{fig:cet} shows that \ac{eICIC} in case 'PF1' benefits the most to users with the worst \ac{SINR}, namely provides the best \ac{CET}, although the difference with case 'PF2approx' is very small. The good \ac{CET} performance of case 'PF1' comes at the expense of a lower \ac{MUT}, as shown in  Figure \ref{fig:mut}. The highest \ac{MUT} is provided by the 'LBonly' case which better preserves macro cells' resources (see Figure \ref{fig:mut}). However this comes at the expense of fairness, i.e. lower \acp{CET} as shown in Figure \ref{fig:cet}.

The 'PF2approx' case provides the best compromise between overall network capacity and fairness, namely between \ac{MUT} and \ac{CET}. All three cases implementing \ac{SON} algorithms perform much better than the reference case with no \ac{SON} and provide better \ac{QoS}. Gain in \ac{CET} is of 50\%, 103\% and 94\% for 'LBonly', 'PF1' and 'PF2approx' solutions respectively with respect to the reference 'NoSON' solution. The gain in \ac{MUT} is of 140\%, 51\% and 101\% for 'LBonly', 'PF1' and 'PF2approx' solutions, respectively, with respect to the reference 'NoSON' solution. It is noted that only cases 3 and 4 ('PF1' and 'PF2approx') that implement each two \ac{SON} functions can be considered as \ac{eICIC}.

It is noticed that the performance gain of all algorithms presented in this paper depends on traffic density. For low traffic density, only the \ac{CET} is enhanced while at high traffic density, all the \acp{KPI} are improved by the \ac{SON} functions. Hence in practice, a threshold related to traffic demand, e.g. the cell load, can be set to decide when to trigger the SON. Finally, it is noted that all \ac{SA} based \ac{SON} algorithms developed in the paper have fast convergence time, of the order of 20 minutes for stationary traffic.
	
\begin{figure}[!ht]
\centering
\includegraphics[width=3in]{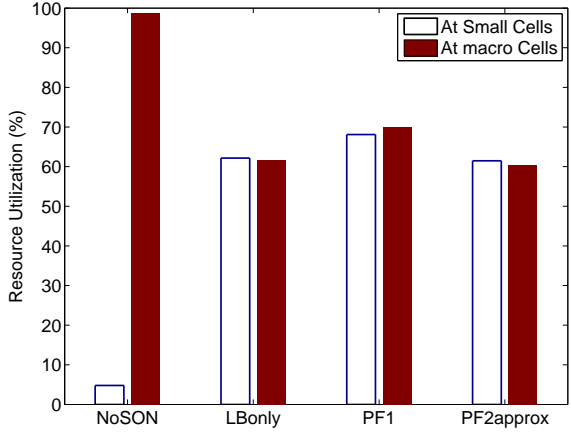}
\caption{Comparison of the maximum loads among the macro - and small cells for cases 1--4}
\label{fig:max_loads}
\end{figure}

\begin{figure}[!ht]
\centering
\includegraphics[width=3in]{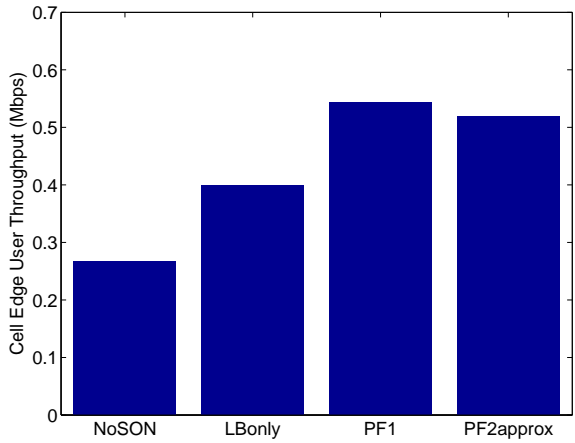}
\caption{Cell Edge Throughput improvement for cases 1--4}
\label{fig:cet}
\end{figure}

\begin{figure}[!ht]
\centering
\includegraphics[width=3in]{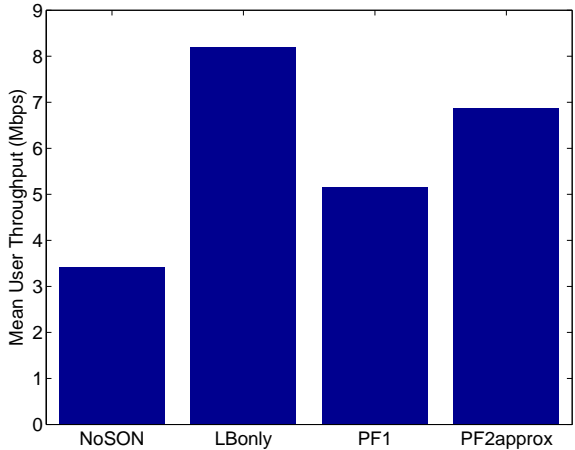}
\caption{Mean User Throughput improvement for cases 1--4}
\label{fig:mut}
\end{figure}

\section{Conclusion and Remarks} \label{sec:conclusion}

In this paper we have investigated the problem of self-optimizing \ac{eICIC} parameters, namely \ac{CRE} and \ac{ABSr} in LTE-Advanced heterogeneous networks. To this end, we have used results from \ac{SA} which provide a powerful tool for designing simple and efficient \ac{SON} algorithms. Fast convergence time is obtained for all the studied algorithms. The proposed \ac{eICIC} \ac{SON} comprizes two self-optimizing mechanisms. The first is a load balancing \ac{SON} algorithm which adapts the \ac{CRE} parameter of the small cells, and the second adapts the \ac{ABSr} to maximize a \ac{PF} utility of the user throughputs. Two possible implementations of the \ac{eICIC} mechanism have been considered. The first implementation protects users in the cell edge of the small cells (\ac{CRE} users) by scheduling them only during \acp{ABS} of the neighboring macro \acp{BS}. The second implementation allows all small cell users to benefit from the \ac{ABS} mechanism. In each case, the appropriate \ac{ABSrO} algorithm is provided. Performance results for a heterogeneous network with elastic traffic show that the second implementation of the eICIC mechanism provide better global results than the first one which only slightly outperforms in \ac{CET}. So we can say that it is globally better to let all pico \acp{UE} profit from \acp{ABS}. In practice, an opportunistic scheduler such as PF can be used along with the second implementation of \ac{eICIC}. Such implementation gives even more weight to the 'PF2approx' algorithm which is expected to provide the best performance.

\appendices
\section{Reminder on Stochastic Approximation} \label{app:rem_sa}
	Let us consider a stochastic algorithm of the type
	\begin{equation} \label{eq:generic_sa}
	x_{k+1} = x_{k} + \epsilon_k (\nabla_x f(x_k) + M_k )
	\end{equation}
	where $x \in \mathbb{R}$, $\epsilon_k \in \mathbb{R}^+_*$, $M_k$ is a noise value and $f()$ - a function we want to maximize.
	\ac{SA} (\cite[Theorem 2 P16]{Borkar}) says that if
	\begin{itemize}
	\item The series $\epsilon_k$ is non-summable; $\sum_{k=0}^\infty \epsilon_k = \infty$
	\item The series $\epsilon_k$ square summable; $\sum_{k=0}^\infty \epsilon_k^2 < \infty$
	\item $M_k$ is a Martingale difference sequence with respect to the family of $\sigma$-fields $\Sigma_k = \sigma(x_n, M_n, n \leq k )$
	\item $\sup_k \| x_k \| < \infty$ almost surely
	\end{itemize}
	then $x_k$ converges to a compact connected internally chain transitive invariant set of
	\begin{equation} \label{eq:generic_ode}
		\dot{x} = \nabla f(x)
	\end{equation}
	namely that the mean behavior of \eqref{eq:generic_sa} is described by \eqref{eq:generic_ode}.

\section{Proof of Theorem \ref{th:absr_pf1}}\label{app:absr_pf1}
	Suppose that the \ac{ABSr} $\theta$ is bounded away from 0 and 1 ($\theta \in ]0,1[$). Firstly, we can note that $U_{PF1}$ is differentiable on $]0,1[$ and let us evaluate its derivative. Using the properties of the log function ($\log(ab) = \log a + \log b$), we can rewrite $U_{PF1}(\theta)$ as
	\begin{equation} \label{eq:pf1_simple}
	\begin{split}
		U_{PF1}(\theta) = & \sum_{m=1}^M N_m \log(1 - \theta) + N_{p,\textit{CEN}} \log(1-\theta) \\
		                + & N_{p,\textit{CRE}} \log( \theta) + C_1
		\end{split}
	\end{equation}
	where
	\begin{equation*}
	\begin{split}
	C_1 = & \sum_{m=1}^M \sum_{u \in m} \log(\bar{R}_{u,m}) + \sum_{u \in \text{center of } p} \log( \bar{R}_{u,p}^{\text{no ABS}}) \\
	    + & \sum_{u \in \text{CRE of } p} \log(\bar{R}_{u,p}^{\text{ABS}})
	\end{split}
	\end{equation*}
is independent of $\theta$. From  \eqref{eq:pf1_simple}, we can easily derive
	\begin{equation}  \label{eq:pf1_der}
		\frac{\partial U_{PF1}(\theta)}{\partial \theta} = \frac{N_{p,\textit{CRE}}}{\theta} - \frac{N_{p,\textit{CEN}} + \sum_{m=1}^M N_m}{1-\theta}
	\end{equation}
Using \ac{SA} results (see Appendix \ref{app:rem_sa}), we can see that the equivalent \ac{ODE} to \eqref{eq:ABSrO_PF1_SA} is
	\begin{equation} \label{eq:ABSrO_PF1_ODE}
	\dot{\theta} = \frac{\partial U_{PF1}(\theta)}{\partial \theta}
	\end{equation}
	Since $U_{PF1}(\theta)$ is the sum of the log of concave positive functions, it is concave. So \eqref{eq:ABSrO_PF1_ODE} converges toward the maximum of $U_{PF1}(\theta)$.

\section{Proof of Theorem \ref{th:absr_pf2_exact}} \label{app:absr_pf2_exact}
The proof is similar to that of Theorem \ref{th:absr_pf1}. Except now $U_{\text{PF2\_exact}}(\theta)$ is be rewritten as
	\begin{equation} \label{eq:pf2_exact_simple}
	\begin{split}
		U_{\text{PF2\_exact}}(\theta) = & \sum_{u \in p} \log((1-\theta) \bar{R}_{u,p}^{\text{no ABS}} + \theta \bar{R}_{u,p}^{\text{ABS}}) \\
		                + & \sum_{m=1}^M N_m \log(1 - \theta) + C_2
	\end{split}	
	\end{equation}
	where $C_2 = \sum_{m=1}^M \sum_{u \in m} \log(\bar{R}_{u,m})$ is independent of $\theta$.
	
	Function \eqref{eq:pf2_exact_simple} is also a concave function of $\theta$ and its derivative reads
	\begin{equation}  \label{eq:pf2_exact_der}
		\frac{\partial U_{\text{PF2\_exact}}(\theta)}{\partial \theta} = \sum_{u \in p} \frac{1}{\theta + \frac{\bar{R}_{u,p}^{\text{no ABS}}}{\bar{R}_{u,p}^{\text{ABS}} - \bar{R}_{u,p}^{\text{no ABS}}}} - \frac{\sum_{m=1}^M N_m}{1-\theta}
	\end{equation}
	The rest of the proof follows from Appendix \ref{app:absr_pf1}.
	
\section{Proof of Theorem \ref{th:absr_pf2}} \label{app:absr_pf2}
The proof is also similar to that of Theorem \ref{th:absr_pf1}, except that $U_{PF2}(\theta)$ is written as
	\begin{equation} \label{eq:pf2_simple}
	\begin{split}
		U_{PF2}(\theta) = & \sum_{m=1}^M N_m \log(1 - \theta) + \frac{N_p}{2} \log(1-\theta) \\
		                + & \frac{N_p}{2} \log(\theta) + C_3
	\end{split}	
	\end{equation}
	where
	\begin{equation*}
	\begin{split}
	C_3 = & \sum_{m=1}^M \sum_{u \in m} \log(\bar{R}_{u,m}) + \sum_{u \in p} \frac{1}{2} \log( \bar{R}_{u,p}^{\text{no ABS}}) \\
	    + & \sum_{u \in p} \frac{1}{2} \log(\bar{R}_{u,p}^{\text{ABS}}) + N_p \log(2)
	\end{split}
	\end{equation*}
is independent of $\theta$.

	Function \eqref{eq:pf2_simple} is also concave and its derivative will then be
	\begin{equation}  \label{eq:pf2_der}
		\frac{\partial U_{PF2}(\theta)}{\partial \theta} = \frac{N_p}{2 \theta} - \frac{\frac{N_p}{2} + \sum_{m=1}^M N_m}{1-\theta}
	\end{equation}
	
	The rest of the proof follows from Appendix \ref{app:absr_pf1}.

\bibliographystyle{IEEEtran}
\bibliography{main}

\end{document}